\algnewcommand\algorithmicinput{\textbf{Input:}}
\algnewcommand\INPUT{\item[\algorithmicinput]}
\algnewcommand\algorithmicoutput{\textbf{Output:}}
\algnewcommand\OUTPUT{\item[\algorithmicoutput]}
\newtheorem{theo}{Theorem}
\newtheorem{remark}[theo]{Remark}
\newtheorem{definition}[theo]{Definition}
\newtheorem{proposition}[theo]{Proposition}
\newtheorem{assumption}{Statistical Assumption}
\DeclareMathOperator{\ord}{ord}
\DeclareMathOperator{\LCM}{LCM}
\DeclareMathOperator{\GCD}{GCD}
\DeclareMathOperator{\CRT}{CRT}
\DeclareMathOperator{\DLog}{DLog}
\newcounter{mycount}
\title{A Discrete Logarithm-based Approach to Compute Low-Weight Multiples of Binary Polynomials}
\author{P. Peterlongo, M. Sala and C. Tinnirello\thanks{\textit{Corresponding author}: claudia.tinnirello@unitn.it}\\
\\
\textit{Department of Mathematics, University of Trento, Italy}}
\date{\today}
\begin{document}

\maketitle

\begin{abstract}
Being able to compute efficiently a low-weight multiple of a given binary polynomial is often
a key ingredient of correlation attacks to LFSR-based stream ciphers.
The best known general purpose algorithm is based on the generalized birthday problem.
We describe an alternative approach which is based on discrete logarithms and has much lower memory complexity requirements
with a comparable time complexity.% with respect to the generalized birthday-based approach.
%%% C: should we say something about applying to E0?
\end{abstract}

\section{Introduction}
A Linear Feedback Shift Register (LFSR) is the basic component of many keystream generators for stream ciphers applications. It is
defined by its connection polynomial, which is a binary polynomial.
A parity check for a single LFSR is a multiple of its connection polynomial, while
a parity check for more than one LFSR is a multiple of the least common multiple of the respective connection polynomials.
The weight of a parity check is the weight of the associated polynomial, that is, the number of its nonzero coefficients.

Correlation attacks were introduced by Siegenthaler in \cite{siegenthaler1985decrypting} to cryptanalyze
a large class of stream ciphers based on LFSRs.
A major improvement by Meier and Staffelbach \cite{meier1989fast} led to different versions of
fast correlation attacks \cite{canteaut2000improved,johansson1999improved,chepyzhov2001simple}.

These attacks try to find a correlation between the output of the stream cipher and
one of the LFSRs on which it is built, then they try to recover the state of the LFSR by decoding the keystream
as a noisy version of the LFSR output.
A fast version of a correlation attack involves the precomputation of multiple
parity checks of one of the LFSRs in order to speed up the computation.
This precomputation step can be computed according to two
different approaches, one based on the birthday paradox \cite{chose2002fast} and another based on
discrete logarithms \cite{didier2007finding}.
%We recall that computing a parity check of one or more LFSRs is equivalent to finding a multiple of a polynomial
%over a finite field (the product of the connection polynomials of LFSRs).
The determination of an efficient algorithm for the computation of a polynomial multiple
(subject to given constraints, e.g. on the degree) is an interesting problem in itself for computational algebra.

It might be argued that the design of modern stream ciphers evolved accordingly. A cipher like E0 (\cite{gehrmann2004bluetooth})
is not subject to these types of attacks, since no single LFSR is correlated to the keystream output.
In \cite{lu2004faster}, Lu and Vaudenay introduced a new fast correlation attack which
is able to successfully recover the state of E0.
Their attack requires a precomputation step which computes a \emph{single} parity check
of \emph{multiple} LFSRs.
The complexity of the precomputation step is not far from the complexity of their full attack,
and in order to keep this estimate low, they employed
the generalized birthday approach presented in \cite{wagner2002generalized}.
Further research was recently presented in \cite{pts2014yacc}, which contains in particular a straightforward generalization of the
discrete logarithm approach of \cite{didier2007finding}.

In this paper we generalize the discrete log approach of \cite{didier2007finding},
proposing an algorithm that is able to compute a single parity check of multiple LFSRs. To be more precise, the problem we will address throughout
the paper is the following:
\begin{myprob}
 {Find a given-weight polynomial multiple with target degree}
 {A polynomial $p$ and two integers $w\ge 3$ and $D$.}
 {A multiple of $p$ of weight $w$ and degree at most $D$, if it exists.}
\end{myprob}

In Section \ref{sec:math}, we present our strategy, we fix the notation and we give algebraic results on which the algorithm is based.
% come mettiamo la parte: low weight - correlation, low degree - short keystream?
The algorithm we propose is explained in Section \ref{sec:algorithm}, along with a comparison of its complexity
to the generalized birthday approach and to the straightforward generalization of the discrete log approach for the case
of a single primitive polynomial.
A significant example application of our approach is outlined in Subsection \ref{sss:e0}, where we show that for the fast
correlation attack in \cite{lu2004faster} our algorithm could be more convenient to use in the second precomputation step than
the generalized birthday approach.
We draw our conclusions in Section \ref{sec:conclusions}.

\section{Our Strategy} \label{sec:math}

In this section we explain our general strategy for the resolution of Problem 1, making reference
to the results reported in Subsection \ref{sbs:results}.
We can translate the problem of finding a multiple of a binary polynomial
into finding an appropriate sequence of integers modulo the order of a polynomial (Propositions~\ref{pro:order},\ref{pro:multiple_criterion}).
Next, we show that the general problem is solved once we know how to solve the problem for single factors
of the unique factorization into irreducible polynomials (Proposition~\ref{pro:coprimes}).
The case of irreducible non-primitive polynomials is reduced to the case of primitive polynomials, which is
solved through discrete logarithms by an explicit construction which uses Zech's logaritms (Propositions~\ref{pro:vanish_in_alpha},\ref{pro:primitive},\ref{pro:irreducible_not_primitive}).
Finally, powers of irreducible polynomials could be completely characterized, although in our algorithm we will
aim for a subset of all possible multiples (Remark~\ref{rem:powers_of_irreducibles}) for computational reasons.

%%% P: add something here? mabe the description of this section.
%\textcolor{blue}{Missing paragraph: explanation of this section (with strategy of how we will solve the problem).}
% We are reduced to solve the following problem
% \begin{myprob}
%  {Low-weight polynomial multiple for a power of an irreducible polynomial}
%  {$p$, power of an irreducible polynomial, and two integers $w\ge 3$ and $D$.}
%  {A multiple of $p$ of weight $w$ and degree at most $D$, if it exists.}
% \end{myprob}
% 
% We will break down the solution into steps
% \begin{enumerate}
% \item We first resolve the problem for a primitive polynomial;
% \item secondly we will consider the case of a irreducible no primitive polynomial;
% \item finally we resolve the problem for a power of an irreducible polynomial. 
% \end{enumerate}

\subsection{Notation}

In an Euclidean ring, we denote by $a \bmod c$ the remainder of the division of $a$ by $c$.
We write $a \equiv b \pmod c$ if $a$ is congruent to $b$ modulo $c$.
We denote by $\LCM$ and $\GCD$
the Least Common Multiple and the Greatest Common Divisor  --- respectively ---
of polynomials or integers, depedending on its inputs.

Let $R = \mathbb{F}_2[x]$ be the ring of binary polynomials.
A binary polynomial is uniquely determined by the position of its nonzero coefficients, that is, its support.
We denote a binary polynomial by its exponents:
\[
[e_1, e_2, \ldots, e_k] := x^{e_1}+x^{e_2}+\cdots + x^{e_k},
\]
where the $e_i$'s are positive integers. If two integers in the list
are repeated we can omit them from the list and obtain the same binary polynomial.
If no integers are repeated then $[e_1,\ldots,e_k]$ is a polynomial of weight $k$.
%Without loss of generality, we will not consider polynomials which vanish at zero.
From now on when we say ``polynomial'' we actually mean ``binary polynomial''.

The remainder of the division of
a polynomial by the polynomial $1+x^N$
is obtained through reduction modulo $N$ of the exponents:
\begin{equation} \label{eq:remainder}
[e_1,e_2, \ldots, e_k] \bmod [0,N] = [\bar{e}_1,\bar{e}_2,\ldots,\bar{e}_k],
\end{equation}
where $0 \leq \bar{e}_i < N$ and $e_i \equiv \bar{e}_i \pmod N$.

We will denote by $\CRT(e^{(1)},\ldots, e^{(r)},N_1,\ldots, N_r)$ the result of applying the Chinese Remainder Theorem
to integers $e^{(i)}$ and moduli $N_i$. 
This is defined if and only if
\[
e^{(i)} \equiv e^{(j)} \pmod {\GCD(N_i,N_j)}
\]
for all $i,j$ and the result will be the unique $0 \leq \bar{e} < \LCM(N_1,\ldots,N_r)$
such that $\bar{e} \equiv e^{(i)} \pmod {N_i}$ for each $i$.

\subsection{Preliminary Results} \label{sbs:results}

\begin{definition}[{order of a polynomial, \cite[Definition~3.2]{CGC-cd-book-niederreiter97}}]
Let $p$ be a non-zero polynomial in $\mathbb{F}_2[x]$. 
If $p(0)\neq0$, then the order
of $p$ is the least positive integer $N$ such that $1+ x^N$ is a multiple of $p$, and
we denote it by $\ord(p)$.
% If $f(0)=0$, then $f(x)=x^h g(x)$, where $h\in\mathbb{N}$ and $g\in\mathbb{F}_2$
% with $g(0)\neq 0$ are uniquely determined; we define the order of $f$ to be the order of $g$.
\end{definition}

\begin{proposition}[{\cite[Theorem~3.11, Theorem~3.16, Corollary~3.4]{CGC-cd-book-niederreiter97}}] \label{pro:order}
Let $p$ be a binary polynomial of positive degree and $p(0)\neq 0$. 
Let $p=p_1^{b_1}\cdots p_r^{b_r}$, where $b_1,\ldots, b_r\in\mathbb{N}$ and $p_1,\ldots,p_r$
are distinct irreducible polynomials of degree $n_1,\ldots,n_r$,
be the factorization of $p$ in $\mathbb{F}_2[x]$.
Then,
\begin{enumerate}
 \item $\ord(p)=2^t \LCM(\ord(p_1),\ldots, \ord(p_r))$ where $t$ is the smallest integer
  such that $2^t$ is bigger or equal than $\max(b_1,\ldots, b_r)$;
 \item If $p_i$ is irreducible, then $\ord(p_i)\mid 2^{n_i}-1$;
 \item $p_i$ is primitive if and only if $\ord(p_i)= 2^{n_i}-1$.
% \item the order of each $p_i$ is equal to the multiplicative order of a root of $p_i$
%  in any extension of its splitting field;
\end{enumerate}
\end{proposition}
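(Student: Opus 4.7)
The plan is to address the three parts in the order (2), (3), (1), since the formula in (1) relies both on knowing the order of each irreducible factor and on the parity structure of those orders. Parts (2) and (3) concern a single irreducible polynomial and can be handled by passing to the residue field. Since $p_i$ is irreducible of degree $n_i$, the quotient $\mathbb{F}_2[x]/(p_i)$ is the finite field $\mathbb{F}_{2^{n_i}}$, and the image $\alpha$ of $x$ is a root of $p_i$. Because $p_i$ is the minimal polynomial of $\alpha$, we have $p_i \mid 1+x^N$ if and only if $\alpha^N = 1$, so $\ord(p_i)$ equals the multiplicative order of $\alpha$ in $\mathbb{F}_{2^{n_i}}^*$. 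Lagrange's theorem then forces $\ord(p_i) \mid 2^{n_i}-1$, proving (2), and (3) is immediate since primitivity of $p_i$ means precisely that $\alpha$ generates the whole cyclic group of order $2^{n_i}-1$.

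For part (1), I would first reduce to the prime-power case using coprimality. Since the $p_i^{b_i}$ are pairwise coprime, the Chinese Remainder Theorem gives that a polynomial is divisible by $p$ if and only if it is divisible by each $p_i^{b_i}$; applied to the family of polynomials $1+x^N$ this shows
\[
\ord(p) \;=\; \LCM\bigl(\ord(p_1^{b_1}),\ldots,\ord(p_r^{b_r})\bigr).
\]
So everything reduces to computing $\ord(q^b)$ for a single irreducible $q$ of degree $n$ and order $e$.

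The main obstacle is the prime-power step, where the exponent $2^t$ must be produced. Here I would exploit the Frobenius identity in characteristic two: writing $N = 2^s m$ with $m$ odd, one has $1+x^N = (1+x^m)^{2^s}$. Since $m$ is odd, the derivative $m x^{m-1}$ is coprime to $1+x^m$, so $1+x^m$ is separable and each of its irreducible factors appears with multiplicity exactly one. Therefore $q^b \mid 1+x^{2^s m}$ if and only if $q \mid 1+x^m$ and $b \leq 2^s$, i.e.\ $e \mid m$ and $2^s \geq b$; this is compatible with $m$ odd because $e \mid 2^n - 1$ is itself odd by part (2). Minimising over $s$ and $m$ yields $\ord(q^b) = 2^t e$ with $t$ the smallest integer satisfying $2^t \geq b$. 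To assemble part (1), note that each $\ord(p_i)$ is odd, so
\[
\LCM_i\bigl(2^{t_i}\ord(p_i)\bigr) \;=\; 2^{\max_i t_i}\cdot \LCM_i\ord(p_i),
\]
and $\max_i t_i$ is exactly the smallest $t$ with $2^t \geq \max_i b_i$, which is the claimed formula.
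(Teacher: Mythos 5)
Your proof is correct. Note that the paper itself offers no proof of this proposition: it is stated as a citation to Lidl--Niederreiter, so there is no in-paper argument to compare against. What you have written is essentially a self-contained reconstruction of the standard textbook proof: parts (2) and (3) via the residue field $\mathbb{F}_2[x]/(p_i)$ and Lagrange, the reduction of (1) to prime powers via pairwise coprimality, and the prime-power case via the Frobenius identity $1+x^{2^s m}=(1+x^m)^{2^s}$ together with squarefreeness of $1+x^m$ for odd $m$ (derivative argument). All the delicate points are handled: you use the oddness of $e=\ord(q)$ (which follows from part (2)) to make the constraint $e \mid m$ with $m$ odd satisfiable and to pull $2^{\max_i t_i}$ out of the LCM at the end. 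The only step you leave implicit is the standard fact that $\{N \geq 1 : q \mid 1+x^N\}$ is exactly the set of multiples of $\ord(q)$ --- i.e.\ that $\ord(q)$ is the multiplicative order of $x$ in the unit group of $\mathbb{F}_2[x]/(q)$, which makes sense because $q(0)\neq 0$; this is what turns ``divisible by each $p_i^{b_i}$'' into $\ord(p)=\LCM_i\,\ord(p_i^{b_i})$, and it would be worth one sentence. Likewise, $p(0)\neq 0$ guarantees no $p_i$ equals $x$, so each $\alpha \neq 0$; you use this tacitly.
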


\begin{proposition}
\label{pro:multiple_criterion}
Let $f,f',p$ be polynomials, $f=[e_1,\ldots,e_k]$ and $f'=[e_1',\ldots,e_k']$ such that for any~$i$,
$e_i \equiv e_i' \pmod M$, where $M$ is a multiple of $\ord(p)$.
Then, $f$ is a multiple of $p$ if and only if $f'$ is so.
\end{proposition}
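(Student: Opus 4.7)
The plan is to prove the stronger statement that $f \equiv f' \pmod{1+x^M}$; since $p$ divides $1+x^M$, reducing this congruence modulo $p$ will give $f \equiv f' \pmod p$, and the desired equivalence is immediate.

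First I would establish that $p \mid 1+x^M$. Setting $N = \ord(p)$, by definition $p \mid 1+x^N$, and writing $M = Nq$ the characteristic-$2$ factorization
\[
1 + x^{Nq} \;=\; (1+x^N)\bigl(1 + x^N + x^{2N} + \cdots + x^{(q-1)N}\bigr)
\]
shows that $1+x^N$, and hence $p$, divides $1+x^M$.

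Next I would verify the congruence at the level of individual monomials: for each index $i$, $x^{e_i} \equiv x^{e_i'} \pmod{1+x^M}$. Assuming without loss of generality $e_i \geq e_i'$ and setting $s_i = (e_i-e_i')/M \in \mathbb{Z}_{\geq 0}$, one has $x^{e_i} + x^{e_i'} = x^{e_i'}\bigl(x^{M s_i} + 1\bigr)$, and the same telescoping identity applied to $1+x^{M s_i}$ shows that $1+x^M$ divides this difference. Adding the $k$ congruences (and recalling we are in characteristic~$2$, so addition equals subtraction and possible repetitions in the lists simply cancel) yields $f \equiv f' \pmod{1+x^M}$. Dividing through by $p$ concludes the argument.

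There is essentially no obstacle here; the entire content of the proposition is the reduction of the divisibility question by~$p$ to a question about exponents modulo $M$, and this reduction is precisely what the order of $p$ is designed to achieve via Proposition~\ref{pro:order} (in the form $p \mid 1+x^{\ord(p)}$).
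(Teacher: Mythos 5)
Your proof is correct and follows essentially the same route as the paper: both arguments reduce $f$ and $f'$ modulo $1+x^M$ (the paper via the remainder formula \eqref{eq:remainder}, you via an explicit monomial-by-monomial telescoping identity) and then conclude from the fact that $\ord(p)\mid M$ forces $p \mid 1+x^M$. The only difference is one of detail: you prove the two divisibility facts the paper simply invokes, and your closing attribution of $p \mid 1+x^{\ord(p)}$ to Proposition~\ref{pro:order} is a slight misdirection, since that fact is the definition of $\ord(p)$ itself.
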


\begin{proof}
Since $e_i \equiv e_i' \pmod M$, by \eqref{eq:remainder}, the remainder of the division of $f$ and $f'$ by $1+x^M$ is the same. Let $r$ be this remainder. 
So $f=q (1+x^M)+r$ and $f'=q' (1+x^M)+r$ for some $q$ and $q'$. Let us suppose that $f$ is a multiple of $p$. Since $M$ is a multiple of $\ord(p)$,
$1 + x^M$ is a multiple of $p$.
Then $r$ is also a multiple of $p$, and so $f'$ is a multiple of $p$.  
\end{proof}

\begin{proposition} \label{pro:coprimes}
 Let $p=g_1 \cdots g_r$ with $\GCD(g_i,g_j)=1$ for all $i\neq j$. We denote $N_i = ord(g_i)$ and $N=\LCM(N_1, \ldots, N_r)$.
 Given $w-2$ distinct integers $0<e_2< \ldots < e_{w-1}<N$,
 if there exist $r$ integers $e_{w}^{(1)}, \ldots e_{w}^{(r)}$ with $0 < e_{w}^{(i)} < N_i$ 
 such that for all $i$ and $j$ 
  \begin{enumerate}
  
    \item $[0,e_2,\ldots, e_{w-1}, e_{w}^{(i)}]$ is a multiple of $g_i$
     
    \item $e_{w}^{(i)}\equiv e_{w}^{(j)} \pmod {\GCD(N_i,N_j)}$
     
  \end{enumerate}

 then $[0,e_2,\ldots, e_{w-1}, e_{w}]$ is a multiple of $p$ of weight $w$, where
 \[
 e_{w} = \CRT(e_{w}^{(1)}, \ldots e_{w}^{(r)},N_1,\ldots, N_r).
 \]
 
Furthermore, all multiples of $p$ of weight $w$ and degree at most $N$ can be obtained in this way, that is,
if $[0,e_2,\ldots, e_{w-1}, e_{w}]$ with $0 < e_2<\ldots< e_{w} < N$ is multiple of $p$ of weight $w$,
then the integers $e_{w}^{(i)} = e_w \bmod {N_i}$ satisfy points 1 and 2 above.
\end{proposition}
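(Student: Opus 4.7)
The plan is to prove both directions of the proposition by invoking Proposition \ref{pro:multiple_criterion} componentwise along the coprime factorization $p = g_1 \cdots g_r$, and using the fact that a product of pairwise coprime polynomials divides a given polynomial if and only if each factor does.

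For the forward direction, I would first observe that condition 2 is exactly the pairwise compatibility required by the Chinese Remainder Theorem on the moduli $N_1, \ldots, N_r$, so that $e_w = \CRT(e_w^{(1)}, \ldots, e_w^{(r)}, N_1, \ldots, N_r)$ is well-defined and satisfies $e_w \equiv e_w^{(i)} \pmod{N_i}$ for every $i$. Fix such an index $i$. The exponent lists $(0, e_2, \ldots, e_{w-1}, e_w)$ and $(0, e_2, \ldots, e_{w-1}, e_w^{(i)})$ agree termwise modulo $N_i$, which is by definition $\ord(g_i)$, so Proposition \ref{pro:multiple_criterion} (applied with $M = N_i$) gives that the first polynomial is a multiple of $g_i$ if and only if the second is; the latter holds by condition 1. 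Since the $g_i$ are pairwise coprime, divisibility by $g_i$ for every $i$ implies divisibility by $p$.

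For the converse direction, given a weight-$w$ multiple $[0, e_2, \ldots, e_{w-1}, e_w]$ of $p$ with exponents bounded by $N$, I would set $e_w^{(i)} := e_w \bmod N_i$. Condition 2 is then automatic because both $e_w^{(i)}$ and $e_w^{(j)}$ reduce to $e_w$ modulo $\GCD(N_i, N_j)$. For condition 1 I would apply Proposition \ref{pro:multiple_criterion} in the opposite direction: the two exponent lists again agree modulo $N_i = \ord(g_i)$, and since $g_i$ divides $p$ it divides $[0, e_2, \ldots, e_{w-1}, e_w]$, hence also the reduced polynomial $[0, e_2, \ldots, e_{w-1}, e_w^{(i)}]$.

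The main subtlety I expect is the weight-$w$ bookkeeping that Proposition \ref{pro:multiple_criterion} does not itself handle: one must verify that $e_w$ is strictly positive, which follows from $e_w^{(i)} > 0$ forcing $e_w \not\equiv 0 \pmod{N_i}$, and that $e_w$ is distinct from each of $0, e_2, \ldots, e_{w-1}$, so that no monomials cancel when passing between $[0, e_2, \ldots, e_{w-1}, e_w^{(i)}]$ and $[0, e_2, \ldots, e_{w-1}, e_w]$. Once this distinctness is dispatched, the core argument reduces to a clean pairing of the CRT with Proposition \ref{pro:multiple_criterion}.
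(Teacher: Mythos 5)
Your proof is correct and takes essentially the same route as the paper's: read condition 2 as CRT compatibility, apply Proposition~\ref{pro:multiple_criterion} componentwise with $M=N_i=\ord(g_i)$ in both directions, and use pairwise coprimality of the $g_i$ to pass from divisibility by every factor to divisibility by $p$. Your closing remark on weight bookkeeping ($e_w\neq 0$ and $e_w\notin\{e_2,\ldots,e_{w-1}\}$) is the one point the paper's proof passes over in silence, and flagging it is apt, since distinctness of $e_w$ from the other exponents is not actually forced by conditions 1 and 2 alone (a degenerate choice of $e_w^{(i)}$ can collapse the output to weight $w-2$).
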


\begin{proof} %%% change proof according to previous propositions
Let $j\in\{1,\ldots, r\}$ and let us prove that $[0,e_2,\ldots, e_{w-1}, e_{w}]$ is a multiple
of $g_j$. By point 2 we can apply CRT and get an integer $e_{w}< N$
such that $e_{w}\equiv e_{w}^{(i)} \pmod {N_i}$ for all $i$. Then, by point 1, applying Proposition \ref{pro:multiple_criterion} to $f=[0,e_2,\ldots, e_{w-1}, e_{w}]$ and $f'=[0,e_2,\ldots, e_{w-1}, e_{w}^{(j)}]$, we get that $f$ is a multiple of $g_j$.\\
Let $[0,e_2,\ldots, e_{w-1}, e_{w}]$ be a multiple of $p$ of weight $w$ of degree at most $N$, and for $i\in\{1,\ldots, r\}$ let $e_{w}^{(i)} = e_w \bmod {N_i}$. Then, for any $i$, 
$[0,e_2,\ldots, e_{w-1}, e_{w}]$ is a multiple of $g_i$, and so, by Proposition \ref{pro:multiple_criterion} the integers
$e_{w}^{(i)}$'s satisfy point 1. On the other hand, thanks to CRT, since $e_{w}^{(i)} = e_w \bmod {N_i}$, they satisfy also point 2.
\end{proof}

\begin{proposition}[{\cite[Lemma~2.12 (ii)]{CGC-cd-book-niederreiter97}}] 
\label{pro:vanish_in_alpha}
Let $p$ be an irreducible polynomial and let $\alpha$ be a root of $p$ in an extension field
of $\mathbb{F}_2$. Then a polynomial $q$ is a multiple of $p$ if and only if $q(\alpha)=0$.
\end{proposition}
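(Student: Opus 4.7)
The plan is to prove both directions of the equivalence separately, relying on the division algorithm in $\mathbb{F}_2[x]$ together with the irreducibility of $p$ to identify $p$ with the minimal polynomial of $\alpha$ over $\mathbb{F}_2$.

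The forward direction is essentially immediate: if $q$ is a multiple of $p$, write $q = p \cdot h$ for some $h \in \mathbb{F}_2[x]$, and evaluate at $\alpha$ to obtain $q(\alpha) = p(\alpha)\, h(\alpha) = 0$, since $\alpha$ is by hypothesis a root of $p$.

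For the reverse direction, assume $q(\alpha) = 0$. I would apply the division algorithm in $\mathbb{F}_2[x]$ to write $q = p \cdot h + r$ with $\deg r < \deg p$, and then substitute $\alpha$ to get $r(\alpha) = q(\alpha) - p(\alpha)\,h(\alpha) = 0$. The main obstacle is to conclude that $r$ must be identically zero. To handle this, I would argue that $p$ coincides (up to scalar) with the minimal polynomial of $\alpha$ over $\mathbb{F}_2$: the set $I = \{f \in \mathbb{F}_2[x] : f(\alpha) = 0\}$ is an ideal of $\mathbb{F}_2[x]$, hence principal, generated by some polynomial $m$ of degree at least $1$ (since $\alpha$ lies in an extension and so is not a root of any nonzero constant). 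Because $p \in I$ we have $m \mid p$, and the irreducibility of $p$ forces $m$ to be an associate of $p$; over $\mathbb{F}_2$ this gives $m = p$. Since $r \in I$ and $\deg r < \deg p = \deg m$, the minimality of $m$ implies $r = 0$, whence $q = p \cdot h$ is a multiple of $p$, as required.
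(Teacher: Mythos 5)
Your proof is correct; note that the paper itself offers no proof of this proposition, citing it directly from Lidl--Niederreiter, and your argument (forward direction by evaluation, reverse direction via the division algorithm and the identification of the irreducible $p$ with the minimal polynomial of $\alpha$, using that over $\mathbb{F}_2$ associates coincide) is exactly the standard textbook proof behind that citation. No gaps: the ideal $I$ is nonzero because $p \in I$, and the case $r = 0$ is handled correctly by the minimality of $\deg m$.
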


\begin{definition}[Discrete Logarithm] %, \cite[Definition~3.2]{CGC-cd-book-niederreiter97}] refs at pag 74,79, 542, 544 of LN but no formal definition
Let $\alpha \in \mathbb{F}_{2^n}$ a primitive element and $\beta  \in \mathbb{F}_{2^n}, \beta \neq 0$. We define
the discrete logarithm of $\beta$ with respect to $\alpha$ as the unique integer $0 \leq i < 2^n$ such that
$\alpha^i=\beta$. We use the notation $i = \DLog_\alpha(\beta)$.
\end{definition}

\begin{remark}
Let $\alpha$ be the root of a primitive polynomial $p$.
Given $e_1,\ldots, e_{w-1}$, if we are able to compute
\begin{equation} \label{eq:dlog}
e_w = \DLog_\alpha(\alpha^{e_1} + \ldots + \alpha^{e_{w-1}}),
\end{equation}
by Proposition \ref{pro:vanish_in_alpha}, we know that $[e_1,\ldots,e_{w-1},e_w]$ is a multiple of $p$.
Viceversa, if $[e_1,\ldots,e_{w-1},e_w]$ is a multiple of $p$, then \eqref{eq:dlog} holds.
\end{remark}

In order to give a characterization of the $(w-1)$-uples of exponents for which the expression
\eqref{eq:dlog} is computable, we introduce Zech's Logarithms.
They can also be used to compute in an efficient way discrete logarithms (see \cite{huber1990some,douillet2001zech}).
%\begin{proposition}
%Let $p$ be an irreducible polynomial and let $N$ its order. For any polynomial $f$,  if $p\mid f$
%then $p\mid  \bar{f}$ where $\bar{f}$ is the remainder of the division of $f$ by $(x^N+1)$.
%\end{proposition}
%\begin{proof}
%Let $\alpha$ be a root of $p$ in its splitting field.
%From the proposition \ref{pro:vanish_in_alpha}, it is enough to prove that $\bar{f} (\alpha)=0$.
%Let $x^{a+\lambda N}$ be a monomial of $f$. We have that $\alpha^N=1$ which implies that $x^{a+\lambda N}|_{\alpha}= x^{a}|_{\alpha}$.
%Since $f$ and $\bar{f}$ evaluated at $\alpha$ agree on all monomials, we have that $\bar{f} (\alpha) = f(\alpha) = 0$.  
%\end{proof}

 \begin{definition}[Zech's Logarithm]%, \cite[Definition~3.2]{CGC-cd-book-niederreiter97}] refs at pag 74,79, 542, 544 of LN but no formal definition
 Let $p$ be a primitive polynomial of degree $n$,  $\alpha$ a root of $p$ in $\mathbb{F}_{2^n}$, and
 $N=\ord(p)$.
 The Zech's Logarithm with base $\alpha$ of an integer $j$ is the integer $0 < i < N$ such that 
 $i = \DLog_\alpha(1 + \alpha^j)$ and it will be denoted by $i =Z_\alpha(j)$.
 When $j \bmod N = 0$, then $1+\alpha^i = 0$ and we will say that $Z_\alpha(j)$ is not defined.
 \end{definition}

\begin{remark} \label{rem:zlogs}
The expression \eqref{eq:dlog} can be rewritten in terms of a chain of Zech's Logarithms in this way:
\[
\alpha^{e_w}= \alpha^{e_1} + \ldots + \alpha^{e_{w-1}} = \alpha^{e_1} ( 1 + \alpha^{e_2 - e_1} ( \cdots ( 1+ \alpha^{e_{w-2} - e_{w-3}} (1 + \alpha^{e_{w-1} - e_{w-2}})) \cdots ))
\]
\begin{equation}
e_w = e_1 + Z_\alpha( e_2 - e_1 + Z_\alpha( \cdots Z_\alpha(e_{w-2} - e_{w-3} + Z_\alpha (e_{w-1} - e_{w-2}) ) \cdots ) )
\end{equation}
which can also be seen recursively as

\[
\begin{cases}
   E_1 = Z_\alpha (e_{w-1} - e_{w-2})\\
   E_{k+1} = Z_\alpha( e_{w-k-1} - e_{w-k-2} + E_k)
   
   \end{cases}
\]
and we denote the final result $e_w = e_1 + E_{w-2}$ by
\[
Z_\alpha(e_1,e_2,\ldots, e_{w-1}) = \DLog_\alpha(\alpha^{e_1} + \ldots + \alpha^{e_{w-1}}).
\]

Note that $Z_\alpha(e_1,e_2,\ldots, e_{w-1})$ is the concatenation of $(w-2)$ Zech's Logarithms.
Let us suppose that $E_i$ is not defined and $E_1, \ldots, E_{i-1}$ are defined. If $i = w-2$ (the last Zech's Logarithm is undefined)
then the discrete logarithm in \eqref{eq:dlog} cannot be computed since there is no integer $e_w$ such that $\alpha^{e_w}=0$. If $i<w-2$ we have that
\[
e_{w-i} - e_{w-i-1} + E_{i-1} \equiv 0 \pmod N
\]
\[
e_{w-i-1} \equiv e_{w-i} + E_{i-1} \pmod N
\]
\[
\alpha^{e_{w-i-1}} = \sum_{h=1}^{i}\alpha^{e_{w-h}}
\]
\[
\sum_{h=1}^{w-1}\alpha^{e_h} = \sum_{h=1}^{w-i-2}\alpha^{e_h}
\]
If $i=w-3$ the discrete logarithm is computed as $e_{w} = e_{1} \bmod N$.
Otherwise, we check whether we can compute the discrete logarithm $Z_\alpha(e_1,\ldots, e_{w-i-2})$ applying the same arguments.
\end{remark} 

An easy counting argument shows the following:
\begin{proposition} \label{pro:primitive}
Let $\alpha$ be a root of a primitive polynomial $p$ of order $N$.
The set of $(w-2)$-uples of integers $e_2,\ldots,e_{w-1}$
such that $0 < e_2 < \ldots < e_{w-1}<N$ has cardinality $\binom{N-1}{w-2}$.
Its subset such that we can compute $\DLog_\alpha(1 + \alpha^{e_2} + \ldots + \alpha^{e_{w-1}})$
has cardinality at least $\binom{N - 1}{w-3} (N - 2)$.
\end{proposition}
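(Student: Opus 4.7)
My plan is first to dispatch the total count, then to identify the ``computable'' subset with a clean algebraic condition via Remark \ref{rem:zlogs}, and finally to bound the bad set by fixing a prefix of exponents.

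\textbf{First,} the set of $(w-2)$-uples $(e_2,\ldots,e_{w-1})$ with $0<e_2<\cdots<e_{w-1}<N$ is in natural bijection with the $(w-2)$-subsets of $\{1,\ldots,N-1\}$ (a subset is read off in increasing order), so its cardinality is $\binom{N-1}{w-2}$. This part is routine.

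\textbf{Second,} I would invoke Remark \ref{rem:zlogs} to identify the computable subset as exactly those tuples for which $1+\alpha^{e_2}+\cdots+\alpha^{e_{w-1}}\neq 0$ in $\mathbb{F}_{2^n}$. The remark explicitly handles the case when an intermediate $E_i$ is undefined: if $i<w-2$ the chain of Zech's logarithms reduces to one on a strictly shorter prefix, while the only genuine obstruction is $i=w-2$, which is equivalent to the entire sum vanishing. Thus, ``computable'' and ``sum nonzero'' coincide, and the task reduces to a purely combinatorial count of tuples avoiding the vanishing locus.

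\textbf{Third,} I would lower-bound the good set by a prefix/extension count. Fix a $(w-3)$-uple $0<e_2<\cdots<e_{w-2}<N$ (there are $\binom{N-1}{w-3}$ such prefixes). The bad condition reads
\[
\alpha^{e_{w-1}} \;=\; 1+\alpha^{e_2}+\cdots+\alpha^{e_{w-2}},
\]
and since $\alpha$ is primitive, the right-hand side is an element of $\mathbb{F}_{2^n}$ that is hit by at most one exponent in $\{0,1,\ldots,N-1\}$. Consequently, for each such prefix at least $N-2$ of the $N-1$ nonzero candidates for $e_{w-1}$ produce a tuple whose sum is nonzero, yielding the asserted lower bound $\binom{N-1}{w-3}(N-2)$.

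\textbf{The main obstacle} I anticipate is a bookkeeping one: the prefix/extension counting treats $e_{w-1}$ asymmetrically and does not directly impose the ordering constraint $e_{w-1}>e_{w-2}$, so one has to justify that the count still applies to the strictly increasing $(w-2)$-uples of the first part. I would handle this by phrasing the count in the form that the algorithm of Section \ref{sec:algorithm} uses --- iterating over prefixes and over the last exponent --- so that the identification with the sorted $(w-2)$-subsets is transparent and no overcounting correction inflates the figure.
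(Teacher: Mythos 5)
Your first two steps are sound, and your second step is the right reading of Remark~\ref{rem:zlogs}: since $\alpha$ is primitive, every nonzero element of $\mathbb{F}_{2^n}$ has a discrete logarithm, so ``the chain is computable'' is equivalent to $1+\alpha^{e_2}+\cdots+\alpha^{e_{w-1}}\neq 0$. The gap is in your third step, and the ``bookkeeping'' obstacle you flag at the end is fatal rather than cosmetic. The quantity $\binom{N-1}{w-3}(N-2)$ counts pairs (sorted $(w-3)$-prefix, free last exponent): each sorted $(w-2)$-tuple arises from up to $w-2$ such pairs (the vanishing condition is symmetric, so any of the $w-2$ elements can play the role of the singled-out exponent), and pairs in which the last exponent collides with a prefix element correspond to no admissible tuple at all. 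No rephrasing ``as the algorithm iterates'' can absorb this factor of roughly $w-2$: for $w\geq 4$ and large $N$ the claimed bound actually exceeds the total count, e.g.\ for $w=4$ it reads $(N-1)(N-2)$ while the whole set has only $\binom{N-1}{2}=(N-1)(N-2)/2$ elements. So the inequality is not provable as literally stated for the set of strictly increasing tuples. You should also know that the paper supplies no proof --- the proposition is prefaced only by ``an easy counting argument shows'' --- and its stated bound is exactly your uncorrected prefix-times-extension count; you have in effect faithfully reconstructed the intended argument together with its flaw, and then honestly identified the unfixable step.

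The correct version of your argument runs the counting on the \emph{bad} set: map each bad sorted tuple (one with $1+\alpha^{e_2}+\cdots+\alpha^{e_{w-1}}=0$) to its first $w-3$ coordinates; since then $e_{w-1}=\DLog_\alpha(1+\alpha^{e_2}+\cdots+\alpha^{e_{w-2}})$ is uniquely determined by those coordinates, the map is injective, so there are at most $\binom{N-1}{w-3}$ bad tuples, and the computable subset has cardinality at least $\binom{N-1}{w-2}-\binom{N-1}{w-3}$. This corrected bound is still of order $N^{w-2}$ for fixed small $w$, so it fully supports the ``for almost all $(e_2,\ldots,e_{w-1})$'' conclusion the paper draws immediately after the proposition, which is all that the analysis of Algorithm~1 actually uses.
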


Note that for small $w$ and large $N$, $\binom{N - 1}{w-2}\simeq (N-1)^{w-2} \simeq N^{w-2}$  and $\binom{N - 1}{w-3}(N-2)\simeq (N-1)^{w-3}(N-2)\simeq N^{w-2}$. Then, for almost all $(e_2,\ldots,e_{w-1})$ we can compute $\DLog_\alpha(1 + \alpha^{e_2} + \ldots + \alpha^{e_{w-1}})$.

\begin{proposition} \label{pro:irreducible_not_primitive}
%(proposition on irreducible not primitive)
Let $p$ be an irreducible non-primitive polynomial of degree $n$ and order $N$, let $m=(2^n-1)/N$ and let
$\alpha$ be a primitive root in $\mathbb{F}_{2^n}$.
Then, $f$ is a multiple of $p$ with $f(0) \neq 0$ if and only if there exist integers $e_1, e_2,\ldots, e_{w-2}, e_{w-1}$
multiples of $m$ such that $0<e_1<e_2<\cdots< e_{w-2}<N$, $e_{w-1}=E_\alpha(e_1,e_2,\ldots,e_{w-2})$
with all Zech's logarithms defined, and $f=[0,d_1, d_2,\ldots, d_{w-2}, d_{w-1}]$ with $d_j=e_j/m$.    
\end{proposition}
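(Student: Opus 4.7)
My plan is to translate the divisibility ``$p \mid f$'' into a vanishing condition in $\mathbb{F}_{2^n}$ via Proposition \ref{pro:vanish_in_alpha}, and then rewrite that vanishing as the identity encoded by a chain of Zech's logarithms in base $\alpha$.

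The preparatory step is to identify a root of $p$ inside $\mathbb{F}_{2^n}$. Since $p$ is irreducible of degree $n$ and order $N$, its roots have multiplicative order exactly $N$ and hence lie in the unique cyclic subgroup of $\mathbb{F}_{2^n}^*$ of order $N$, namely the subgroup generated by $\alpha^{m}$. Every such element has the form $\alpha^{m\ell}$ with $\gcd(\ell,N)=1$, and up to replacing $\alpha$ by a suitable primitive element I may assume that $\beta:=\alpha^{m}$ is itself a root of $p$. With this choice, for any $f=[0,d_1,\ldots,d_{w-1}]$, setting $e_j := m d_j$ turns $f(\beta)=0$ into
\[
1+\alpha^{e_1}+\alpha^{e_2}+\ldots+\alpha^{e_{w-1}}=0,
\]
which rearranges to $\alpha^{e_{w-1}} = 1 + \alpha^{e_1} + \ldots + \alpha^{e_{w-2}}$, i.e.\ $e_{w-1}=E_\alpha(e_1,\ldots,e_{w-2})$ in the sense that $\alpha^{e_{w-1}}$ equals the sum computed by unwinding the chain of Zech's logarithms in Remark \ref{rem:zlogs} (with the constant term $\alpha^{0}=1$ included as the first summand).

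For the ``if'' direction I would read the hypothesis on the Zech chain as precisely the displayed identity, substitute back $\alpha^{e_j}=\beta^{d_j}$, and invoke Proposition \ref{pro:vanish_in_alpha} at the root $\beta$ to conclude $p \mid f$; the weight is exactly $w$ because the $d_j$'s are forced distinct by the strict ordering of the $e_j$'s. For the converse, starting from a weight-$w$ multiple $f$ of $p$ with $f(0)\neq 0$, I would first apply Proposition \ref{pro:multiple_criterion} to reduce the exponents of $f$ modulo $N$ so that $0<d_1<\ldots<d_{w-1}<N$ while preserving divisibility by $p$; setting $e_j := m d_j$ then gives $w-1$ distinct multiples of $m$ lying in $(0,mN)=(0,2^n-1)$, and the vanishing $f(\beta)=0$ furnishes the required identity, which by definition of $E_\alpha$ reads $e_{w-1}=E_\alpha(e_1,\ldots,e_{w-2})$.

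The main obstacle is the preparatory identification of $\alpha^{m}$ as a root of $p$: for a generic primitive $\alpha$ one only has a root of the form $\beta=\alpha^{m\ell}$ with $\ell$ coprime to $N$, and the algebraic identity turns into $\sum \alpha^{\ell e_j}=0$. Either one fixes $\alpha$ at the outset so that $\alpha^{m}$ is a chosen root of $p$, or one observes that multiplication by $\ell$ is a bijection on the multiples of $m$ modulo $2^n-1$ and does not affect the combinatorial structure of the chain, so the statement is invariant under this relabelling. A secondary point on the converse direction is that the hypothesis ``with all Zech's logarithms defined'' is a computational statement about the chain; the algebraic identity itself is automatic from $f(\beta)=0$, and the fallback procedure described in Remark \ref{rem:zlogs} handles the cases where some intermediate partial sum vanishes, allowing one to still produce a valid chain (possibly after shortening or reindexing) that certifies the required representation of $f$.
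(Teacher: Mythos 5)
Your proof takes essentially the same route as the paper's: invoke Proposition~\ref{pro:vanish_in_alpha} at the root $\alpha^m$ of $p$ and translate $f(\alpha^m)=0$ into the chain identity $e_{w-1}=E_\alpha(e_1,\ldots,e_{w-2})$ with $e_j=m\,d_j$, in both directions. You are in fact more careful than the paper, whose proof silently assumes that $\alpha^m$ is a root of $p$ (not automatic for a generic primitive $\alpha$ when several irreducible polynomials of degree $n$ and order $N$ exist); your fix of choosing $\alpha$ so that $\alpha^m$ is a root of $p$ is precisely the normalization the statement implicitly requires.
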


\begin{proof}
Let $f=[0,d_1, d_2,\ldots, d_{w-2}, d_{w-1}]$ with $d_1, d_{2}, \ldots, d_{w-1}$ as in the statement
and let us prove that $f(\alpha^m)=0$. Since $\alpha^m$ is a root of $p$, by Proposition
\ref{pro:vanish_in_alpha} this is enough to guarantee that $f$ is a multiple of $p$.
On the other hand, $f(\alpha^m)=1+(\alpha^m)^{d_1}+(\alpha^m)^{d_2}+\ldots + (\alpha^m)^{d_{w-2}}+
(\alpha^m)^{d_{w-1}}= 1+\alpha^{e_1}+\alpha^{e_2}+\ldots \alpha^{e_{w-2}}+\alpha^{e_{w-1}}=0$. The last
equality follows from the hypothesis $e_{w-1}=E_\alpha(e_1,e_2,\ldots,e_{w-2})$.
Vice versa, let $f$ of weight $w$. Suppose that $f=[0,d_1, d_2,\ldots, d_{w-2}, d_{w-1}]$
with $0<d_1<d_2<\cdots< d_{w-2}<N$. If $f$ is a multiple of $p$, then, by Proposition \ref{pro:vanish_in_alpha},
$f(\alpha^m)=0$. From $f(\alpha^m)=0$, we get that $d_j=e_j/m$ for all $j$ and $e_{w-1}=E_\alpha(e_1,e_2,\ldots,e_{w-2})$.
\end{proof}

\begin{remark} \label{rem:powers_of_irreducibles}
 If $[e_1,\ldots,e_w]$ is multiple of $p$, then $[2^t e_1, \ldots, 2^t e_w]$
 is multiple of $p^b$ for all $b \leq 2^t$.
\end{remark}

In Algorithm 1 we will use Remark~\ref{rem:powers_of_irreducibles} to find multiples of powers of irreducibles.
Not all multiples can be found in this way unless the exponent for the repeated factor is a power of two.
In this case a complete characterization of multiples $f$ of power of irreducibles can be given using
the polynomial $\GCD(f,Df)$. This is not convenient in our case, since $\GCD(f,Df)$ has generally
a much higher weight than the weight of $f$.

% \begin{proposition} \label{pro:powers_of_irreducibles}
%  %(proposition on powers of irreducibles)
%  Let $p$ an irreducible polynomial, and let $b$ an integer with $b>1$. A polynomial $f$ is a multiple of $p^b$ if and only if
%  $b\equiv 0\bmod 2$ and $\GCD(f,\D f)^{1/2}$ is a multiple of $p^{b/2}$ or $b\equiv 1\bmod 2$,$\GCD(f,\D f)^{1/2}$
%  is a multiple of $p^{b/2-1}$ and . 
% \end{proposition}

\section{The Algorithm} \label{sec:algorithm}

%\textcolor{blue}{This section is a mess, right now. Just cut and paste from slides.
%I'll also need to remember to cite \cite{chabaud1998new,kuhn1994using} }
%and also Penzhorn Kuhn (they cite huber for tech logs, which was on Trans of IEEE).
%%% P: I should have a section on practcal considerations where i cite the paper of douillet on zech logarithms.
In this section we propose an algorithm to solve Problem 1 % (Subsection \ref{sbs:desc})
making reference to its pseudocode,
then we estimate its complexity and compare it to other approaches% (Subsection \ref{sbs:compare})
. Finally, we consider the case of a correlation attack to E0 (Subsection \ref{sss:e0}).

\subsection{Description of Algorithm 1} \label{sbs:desc}

We describe our proposed algorithm making reference to the pseudocode in Algorithm \ref{algo:main}.

We take as input the factorization of the polynomial $p$:
\[
 p = p_1^{b_1} \cdot \ldots \cdot p_r^{b_r},
\]
where $p_1,\ldots,p_r$ are irreducible polynomials and $b_1,\ldots,b_r$ positive integers.
Computing the factorization is not computationally expensive, it can be efficiently computed using a probabilistic algorithm such as the Cantor-Zassenhaus algorithm
(cfr \cite{CGC-cd-book-niederreiter97}) and in some cases, such as in correlation attacks to LFSR-based stream ciphers,
the polynomial we are interested in is given already by its irreducible factors.

We start (line 2) by computing a root $\alpha_i^{m_i}$ for each irreducible polynomial $p_i$, expressed as a power of a primitive root $\alpha_i$ of $\mathbb{F}_{2^{\deg(p_i)}}$
(if the polynomial is primitive $m_i = 1$). Next (line 3) we compute the order of each root $\alpha_i^{m_i}$, we set $m$ as the
least common multiple of all $m_i$ (line 4), and we
set $t$ as an integer (line 5) that will allow us to take into account powers of irreducible polynomials (in line 12 we will output the multiple of $p$
as a multiple of $p_1 \cdot \ldots \cdot p_r$ elevated to $2^t / m$).

\begin{algorithm}
\begin{algorithmic}[1] %[1]: numbers every line
  \Function{}{$p_1,\ldots,p_r,b_1,\ldots,b_r,w,D$}

% i do not think a for is needed
%  \For {$i=1,\ldots, r$}
    \Statex \Comment{All lines with $i$ repeat for $i=1,\ldots,r$}
    \State $\alpha_i, m_i \gets \mathrm{PrimitiveRoot}(p_i)$
    \Statex \Comment{If $p_i$ is primitive then $m_i =1$}
    \Statex \Comment{If $p_i$ is irreducible not primitive then $\alpha_i^{m_i}$ is a root of $p_i$}
    \State $N_i \gets (2^{\mathrm{deg}(p_i)} - 1)/m_i$
    \Statex \Comment{$N_i$ is the order of $\alpha_i^{m_i}$}
    \State $m \gets \LCM(m_1,\ldots, m_r)$
    \State $t \gets \mathrm{MinIntegerGreaterOrEqualThan}(\log_2 b_1, \ldots, \log_2 b_r)$
%  \EndFor

  \Repeat
    \State $e_{2}, \ldots, e_{w-1} \gets \mathrm{RandomDistinctMultiplesLessThan}(m \cdot D/2^t)$
    \Statex           \Comment{Random sampling of $(w-2)$ distinct integers $\leq D$}
    \Statex           \Comment{which are multiples of $m$}
    \State $e_{i,2}, \ldots, e_{i,k(i)} \gets \mathrm{ReduceAndShorten}(e_{2}, \ldots, e_{w-1}, N_i)$
    \Statex           \Comment{Reduce modulo $N_i$ and eliminate pairs of equal integers}
    \Statex           \Comment{we might obtain a shorter sequence of integers ($k(i) \leq w-1)$}
    \State $e_{i,w} \gets \mathrm{Z}_{\alpha_i} (0,e_{i,2}, \ldots, e_{i,k(i)})$
    \Statex           \Comment{If not possible, restart the cycle}
    \Statex           \Comment{also, restart the cycle if $e_{i,w} \bmod m_i \neq 0$}    
    \State $e_{w} \gets \mathrm{CRT}(e_{1,w},\ldots,e_{r,w}, N_1,\ldots, N_r)$
    \Statex           \Comment{If not possible, restart the cycle}

  \Until {$2^t e_{w} / m \leq D$}
  \State \Return $[0,2^t e_2 / m, \ldots, 2^t  e_w / m]$
  \EndFunction
 \end{algorithmic}
 \caption{Given the factorization of a polynomial $p$, $w$ and $D$, the algorithm finds (if it exists) a multiple of $p$ of weight $w$ and degree at most $D$.} \label{algo:main}
\end{algorithm}

The main cycle samples $w-2$ distinct integers less than $m \cdot D/2^t$ (line 7), then computes for each $i$ their remainders modulo $N_i$ and
discards pairs of equal elements (line 8).
The most demanding computation is done on line 9 where we compute the chain of Zech's logarithms
as described in Remark~\ref{rem:zlogs}. If we are not able to perform this computation or one of the $e_{i,w}$'s is not
a multiple of $m_i$, then we restart the cycle.
Otherwise, we try to compute the exponent $e_w$ through the Chinese Remainder Theorem (line 10) given the exponents computed at previous step.
If all $N_i$'s are coprime
we are sure to be able to perform this step, otherwise it might happen that a couple $(e_{i,w},e_{j,w})$ is not congruent modulo $\GCD(N_i,N_j)$.
In this case we restart the cycle. We remark that in the case of $N_i$ not all coprimes it may happen that we are \emph{never} able to compute this CRT.
We will comment on this aspect in \ref{sss:noncoprimes}.
The exit condition for the cycle (line 11) is verified if we have found a multiple of $p_1 \cdot \ldots \cdot p_r$ with degree at most $m \cdot D/2^t$,
from which we are easily able (line 12) to produce a multiple of $p$ of degree at most $D$ (which will have weight $w$ by construction).
The correctness of Algorithm 1 follows from the results cited in Section \ref{sec:math}.

\subsubsection{The case of $N_i$s not all coprimes} \label{sss:noncoprimes}

Suppose $p$ is the product of $p_1, p_2$ with $N_1=\ord(p_1), N_2=\ord(p_2)$ and that
$\GCD(N_1,N_2)$ is not $1$. In this case it may happen that we are never able to compute $\CRT(e_{1,w},e_{2,w},N_1,N_2)$, that
is, all possible pairs $(e_{1,w},e_{2,w})$ are not congruent modulo $\GCD(N_1,N_2)$.
Using bounds on the minimal distance of cyclic codes coming from coding theory (see \cite[Vol 1, p.60 and following]{CGC-cd-book-pless98}),
one can verify whether or not the $\CRT$ can be computed. If the minimal distance of the cyclic code generated by $p$
is $\geq w_0$, then obviously the $\CRT$ cannot be computed for all $w < w_0$. 

\subsection{Complexity estimates} % temporary title and temporarily not numbered
\label{sbs:compare}
The complexity of Algorithm 1 is estimated in terms of the order of polynomial $N$ and target degree $D$ under an appropriate Statistical Assumption
(Subsection \ref{sss:complex1}). Then (Subsection \ref{sss:complex2}), we compare it to the complexity of birthday-based approaches expressing our estimates in terms of the complexity parameter $n$,
the degree of $p$.

\subsubsection{Complexity of Algorithm 1} \label{sss:complex1}

We set $n$, the degree of polynomial $p$, to be our complexity parameter.
Clearly, the main computational cost in the cycle is the $r$ computations of the chain of $(w-2)$ Zech's logarithms (line 8).
We denote the cost of the Zech's logarithm as a cost $C(n)$. Since we compute $r$ chains of $w-2$ Zech's logarithms,
in each cycle we have a complexity
of $O(r(w-2)C(n))$.
Observe that $C(n)$ depends on $n$ via dependence on the degrees $\delta_i=\deg(p_i)$. Among the
$\delta_i$'s there is one, which we may call $\Delta$, such that $2^{\Delta}-1$ has the biggest factor among the $2^{\delta_i}-1$.
Evidently, $C(n)$ is dominated by the cost of computing the discrete logarithm in the field $\mathbb{F}_{2^{\Delta}}$.
We remark
that the cost of computing discrete logarithms in fields of small characteristics
has recently dropped down to quasi-polynomial complexity (see \cite{joux2013new}).
We are left to estimate the number of cycles.

To estimate the number of cycles which Algorithm 1 needs to go through before stopping,
we make the following
\begin{assumption}
All chains of Zech's Logarithm computation $Z_\alpha$ with random input produce an output which is
uniformly random distributed over $\{1,2,\ldots,M-1\}$, where $M$ is the order of $\alpha$.
We assume also that for $\alpha \neq \beta$ the outputs of $Z_\alpha$ and $Z_\beta$ are independent random variables.
\end{assumption}
Although the Statistical Assumption 1 is reasonable (and experimentally verifiable), there is an unfortunate case
when
there is a pair $(i,j)$ where $\GCD(N_i,N_j) \neq 1$ and there exist no multiple of $p_i \cdot p_j$ of weight $w$
(see discussion in Subsection \ref{sss:noncoprimes}). %chiamamolo subsection anche se  una subsubsection

Under this assumption, we can compute the $e_{i,w}$'s once every $m_i$ cycles (i.e. always if $m_i = 1$).
Furthermore, the $e_{i,w}$'s are uniformly distributed over $\{1,2,\ldots,N_i-1\}$
and so, by the Chinese Remainder Theorem, $e_w$ can be computed
with a probability $P$ where
\[
P \leq \Pi_{(i,j),i \neq j} \frac{1}{\GCD(N_i,N_j)},
\]
and it will be uniformly distibuted over $\{1,2,\ldots,N-1\}$, where $N := \LCM(\{N_i\}_{i=1,\ldots, r})$.

The condition that $2^t e_w \leq m  M$ is thus verified 
after a number of cycles estimated as
\[
O\left(\frac{1}{P} \Pi_i m_i \cdot 2^t N/(m \cdot D)\right)
\]
and
%If there are non coprime factors only a fraction of the cycles produce a valid $e_{w-1}$ and we must
%multiply by the factor $P := \Pi_{(i,j),i \neq j} (N_i,N_j)$ (this might be a big constant).
%(we have that $N \approx 2^n$, where $n$ is the degree of the product of the $p_i$s).
the full time complexity of Algorithm 1 is hence
\begin{equation} \label{eq:complexity}
O\left(r(w-2)(1+C(n)) \frac{1}{P} \Pi_i m_i \cdot \frac{2^t N}{m \cdot D}\right) \, .
\end{equation}

We note that can consider $P$ as constant with respect to $n$, while $N$ can be approximated by $2^n$. % this is not correct, there sould be a term similar to P!
In the next subsubsection we will express $D$ as a function of $n$ and compare estimate \eqref{eq:complexity}
with the complexity of birthday-based approaches.

\subsubsection{Comparison with other approaches} \label{sss:complex2}

Birthday-based approaches compute their complexities under a different statistical assumption:
\begin{assumption}
The multiples of degree at most $D$ of a polynomial of degree $n$ has weight $w$ with probability
circa $\binom{D}{w-1} 2^{-D}$.
\end{assumption}
Using this assumption, one can prove (\cite{golic1996computation}) that
the expected number of the polynomials multiples of weight $w$ for large $D$ is:
\begin{equation} \label{eq:expNnw}
N_{n,w} \simeq \frac{\binom{D}{w-1}}{2^n} \simeq \frac{D^{w-1}}{(w-1)!2^n}
\end{equation}
Using \eqref{eq:expNnw} one computes the (expected)
critical degree where polynomials of degree $w$ will start to appear as
\begin{equation} \label{eq:D0}
D_0 \simeq (w-1)!^\frac{1}{w-1} \cdot 2^\frac{n}{w-1}
\end{equation}
%\begin{remark}
%We claim that one can compute a similar expression for $D_0$ (with dependence on $2^\frac{n}{w-1}$)
%using Statistical Assumption 1. %%% need to add label? no.
%\end{remark}

A method based on the birthday paradox (\cite{meier1989fast}) finds
a multiple of weight $w$ and minimal degree $D_0$ with time complexity
\begin{equation}
O(D_0^{\lceil\frac{w-1}{2}\rceil}),
\end{equation}
and memory complexity
\begin{equation}
O(D_0^{\lfloor\frac{w-1}{2}\rfloor}).
\end{equation}

Using generalized birthday problem (\cite{wagner2002generalized}) one aims at finding
a multiple of same weight but higher degree ($D_1 \simeq 2^\frac{n}{1 + \lfloor \log (w-1) \rfloor}$) in less time.
The time complexity for this approach is
\begin{equation}
O((w-1) \cdot D_1)
\end{equation}
while memory complexity is $O(D_1)$.

In Table \ref{tab:complexity} we summarize the comparison of our discrete log approach with birthday based approaches.
Birthday based approaches do not distinguish the polynomial $p$ in terms of its factorization while Algorithm 1 does,
so for our comparison we assumed to be in the situation of $p$ a multiple of primitive polynomials whose degrees are coprime.
To simplify we also kept out the constant terms and we compare only the part of complexity which depends strictly on $n$.
We set as $D$ for comparison the $D_0$ and $D_1$ which are defined --- respectively --- by birthday and generalized birthday approaches.
We can see that Algorithm 1 performs better (same time complexity but much better memory complexity)
than both birthday based approaches in the case of weights $w=3,4$ and it performs worse than both of them
$w=5$ (higher time complexity).
We remark that this comparison does not take into account the computation constant term given by computation of discrete logarithms.
When it comes to practical implementation of the method this constant term might slow down too much Algorithm 1.
A possible solution could be to precompute discrete logarithms. This would imply an increased memory complexity of the order
of $2^q-1$ where $q$ is the biggest prime factor in the degrees of the polynomials $p_1,\ldots,p_r$.
%{\footnotesize Memory complexity for birthday based methods is high (order of $D$)}

\begin{table}

%%% VERSION with just our DLog
 \begin{tabular}{c|c|c|c|c|c|c|c|c|c|c|c|}
  \cline{2-12}
   & \multicolumn{3}{|c|}{Birthday} & \multicolumn{3}{|c|}{Generalized Birthday} &\multicolumn{5}{|c|}{Algorithm 1} \\ \hline
   \multicolumn{1}{|c|}{$w$} & 3 & 4 & 5 & 3 & 4 & 5 & \textbf{3} & \textbf{4} & \textbf{4} & 5 & 5 \\ \hline
   \multicolumn{1}{|c|}{$D$} & $2^{n/2}$ & $2^{n/3}$ & $2^{n/4}$ & $2^{n/2}$ & $2^{n/2}$ & $2^{n/3}$ & $2^{n/2}$ & $2^{n/2}$ & $2^{n/3}$& $2^{n/3}$ & $2^{n/4}$ \\
   \hline
   \multicolumn{1}{|c|}{time} & $2^{n/2}$ & $2^{2n/3}$ & $2^{n/2}$ & $2^{n/2}$ & $2^{n/2}$ & $2^{n/3}$ & $2^{n/2}$ & $2^{n/2}$ & $2^{2n/3}$& $2^{2n/3}$ & $2^{3n/4}$ \\
   \hline
   \multicolumn{1}{|c|}{memory} & $2^{n/2}$ & $2^{2n/3}$ & $2^{n/2}$ & $2^{n/2}$ & $2^{n/2}$ & $2^{n/3}$ & $1$ & $1$ & $1$& $1$ & $1$ \\
   \hline
 \end{tabular}

%%% VERSION with also naive DLog
%  \begin{tabular}{c|c|c|c|c|c|c|c|c|c|c|c|c|c|c|}
%   \cline{2-15}
%    & \multicolumn{3}{|c|}{Birthday} & \multicolumn{3}{|c|}{Generalized Birthday}& \multicolumn{3}{|c|}{naive DLog} &\multicolumn{5}{|c|}{our DLog} \\ \hline
%    \multicolumn{1}{|c|}{$w$} & 3 & 4 & 5 & 3 & 4 & 5 & 3 & 4 & 5 & \textbf{3} & \textbf{4} & \textbf{4} & 5 & 5 \\ \hline
%    \multicolumn{1}{|c|}{$D$} & $2^{n/2}$ & $2^{n/3}$ & $2^{n/4}$ & $2^{n/2}$ & $2^{n/2}$ & $2^{n/3}$ & $2^{n/2}$ & $2^{n/3}$ & $2^{n/4}$ & $2^{n/2}$ & $2^{n/2}$ & $2^{n/3}$& $2^{n/3}$ & $2^{n/4}$ \\
%    \hline
%    \multicolumn{1}{|c|}{time} & $2^{n/2}$ & $2^{2n/3}$ & $2^{n/2}$ & $2^{n/2}$ & $2^{n/2}$ & $2^{n/3}$ & $2^{n}$ & $2^{2n}$ & $2^{3n}$ & $2^{n/2}$ & $2^{n/2}$ & $2^{2n/3}$& $2^{2n/3}$ & $2^{3n/4}$ \\
%    \hline
%    \multicolumn{1}{|c|}{memory} & $2^{n/2}$ & $2^{2n/3}$ & $2^{n/2}$ & $2^{n/2}$ & $2^{n/2}$ & $2^{n/3}$ & $1$ & $1$ & $1$ & $1$ & $1$ & $1$ & $1$ & $1$ \\
%    \hline
%  \end{tabular}

 \caption{Comparing complexity of Algorithms. We compare only the part of complexity which is exponential in the degree of the polynomial of which we want to find the multiple.
 Algorithm 1 performs better for weigth $3$ and for weight $4$.}
 \label{tab:complexity}
\end{table}

Previous discrete log based approaches were limited to $p$ being a single primitive polynomial and cannot be extended
straightforwardly to the case of general $p$.

In \cite{kuhn1994using}, Penzhorn and Kuhn used a different statistical assumption on the output of Zech's Logarithms.
They assumed that the difference of two Zech's Logarithms over $\{1,2,\ldots,N-1\}$ has exponential distribution of parameter $D/N$ when
both inputs are randomly distributed over $\{1,2,\ldots,D-1\}$. This is applied only for the case $w=4$ and gives an algorithm which is able
to compute a multiple of degree $2^{n/3}$ with time complexity $2^{n/3}$ outperforming all approaches mentioned above.

In \cite{didier2007finding}, Didier and Laigle-Chapuy used a discrete logarithm approach with time memory trade off to compute
a polynomial multiple of degree $D$ in time complexity $O(D^{\lceil \frac{w-2}{2} \rceil})$ and memory complexity $O(D^{\lfloor \frac{w-2}{2} \rfloor})$.

\subsection{Example case: E0} \label{sss:e0}

In \cite{lu2004faster}, Lu and Vaudenay describe a correlation attack to E0, the stream cipher of the Bluetooth protocol (\cite{gehrmann2004bluetooth}). %%% cite bluetooth protocol
E0 is a nonlinear combination generator with memory and it is based on four LFSRs of degrees $25,31,33,39$ with the following
feedback polynomials:
\begin{align}
 p_1 & =  x^{25} + x^{20} + x^{12} + x^8 + 1 \\
 p_2 & =  x^{31} + x^{24} + x^{16} + x^{12} + 1 \\
 p_3 & =  x^{33} + x^{28} + x^{24} + x^4 + 1 \\
 p_4 & =  x^{39} + x^{36} + x^{28} + x^4 + 1 
\end{align}
In the key-recovery attack of \cite{lu2004faster}, two precomputation steps require to compute a weight-5 multiple of
$p_2 \cdot p_3 \cdot p_4$ of degree at most $2^{34.3}$ using generalized birthday approach,
and a weight-3 multiple of $p_3 \cdot p_4$ of degree at most $2^{36}$ using a standard birthday approach.
To find the first multiple precomputation, Algorithm 1 would be inconvenient since it would require a higher time complexity.
For the second multiple precomputation, Algorithm 1 could be useful since it requires approximately the same time complexity while
having a much lower memory complexity.
We remark that in \cite{lu2004faster} the memory complexity for the precomputation part is not explicitly
stated. 

Note also that all $p_i$'s are primitive and that $\GCD(N_3,N_4) = 2^3 - 1 = 7 \neq 1$.
We are in the case where some of the $N_i$'s are not coprime
and for this specific polynomials, it is experimentally verified that the $\CRT$ can be computed roughly once every seven times.

\section{Conclusions} \label{sec:conclusions}

The algorithm presented in this paper is the first able to solve
Problem 1 in a general situation, taking
advantage of the polynomial factorization.
Its complexity should be compared to that of other general purpose
algorithms, in particolar with those
based on the (general) birthday approach. We have been able to show
that in some interesting cases
our algorithm has a time complexity comparable to the generalized
birthday approach,
while having a much lower memory complexity (i.e. $O(1)$).
These cases are relevant to the so-called faster correlation attacks
to a class of stream ciphers, which including the Bluetooth cipher E0.

\bibliographystyle{ieeetr}

\bibliography{pts-refs}

\begin{thebibliography}{10}

\bibitem{siegenthaler1985decrypting}
T.~Siegenthaler, ``{Decrypting a class of stream ciphers using ciphertext
  only},'' {\em Computers, IEEE Transactions on}, vol.~100, no.~1, pp.~81--85,
  1985.

\bibitem{meier1989fast}
W.~Meier and O.~Staffelbach, ``{Fast correlation attacks on certain stream
  ciphers},'' {\em Journal of Cryptology}, vol.~1, no.~3, pp.~159--176, 1989.

\bibitem{canteaut2000improved}
A.~Canteaut and M.~Trabbia, ``Improved fast correlation attacks using
  parity-check equations of weight 4 and 5,'' in {\em Advances in
  Cryptology—EUROCRYPT 2000}, pp.~573--588, Springer, 2000.

\bibitem{johansson1999improved}
T.~Johansson and F.~J{\"o}nsson, ``Improved fast correlation attacks on stream
  ciphers via convolutional codes,'' in {\em Advances in
  Cryptology—EUROCRYPT’99}, pp.~347--362, Springer, 1999.

\bibitem{chepyzhov2001simple}
V.~V. Chepyzhov, T.~Johansson, and B.~Smeets, ``A simple algorithm for fast
  correlation attacks on stream ciphers,'' in {\em Fast Software Encryption},
  pp.~181--195, Springer, 2001.

\bibitem{chose2002fast}
P.~Chose, A.~Joux, and M.~Mitton, ``Fast correlation attacks: An algorithmic
  point of view,'' in {\em Advances in Cryptology—EUROCRYPT 2002},
  pp.~209--221, Springer, 2002.

\bibitem{didier2007finding}
F.~Didier and Y.~Laigle-Chapuy, ``{Finding low-weight polynomial multiples
  using discrete logarithm},'' in {\em {Information Theory, 2007. ISIT 2007.
  IEEE International Symposium on}}, pp.~1036--1040, IEEE, 2007.

\bibitem{gehrmann2004bluetooth}
C.~Gehrmann, J.~Persson, and B.~Smeets, {\em {Bluetooth security}}.
\newblock Artech house, 2004.

\bibitem{lu2004faster}
Y.~Lu and S.~Vaudenay, ``{Faster correlation attack on Bluetooth keystream
  generator {E0}},'' in {\em {Advances in Cryptology--CRYPTO 2004}},
  pp.~407--425, Springer, 2004.

\bibitem{wagner2002generalized}
D.~Wagner, ``A generalized birthday problem,'' in {\em Advances in
  cryptology—CRYPTO 2002}, pp.~288--304, Springer, 2002.

\bibitem{pts2014yacc}
P.~Peterlongo, C.~Tinnirello, and M.~Sala, ``Low-weight common multiples of
  binary primitive polynomials through discrete logarithms,'' in {\em Book of
  Abstracts}, pp.~103--107, YACC 2014, 2014.

\bibitem{CGC-cd-book-niederreiter97}
R.~Lidl and H.~Niederreiter, {\em {Finite Fields}}.
\newblock {Encyclopedia of Mathematics and its Applications}, Cambridge
  University Press, 1997.

\bibitem{huber1990some}
K.~Huber, ``Some comments on {Z}ech's logarithms,'' {\em IEEE Transactions on
  Information Theory}, vol.~36, no.~4, pp.~946--950, 1990.

\bibitem{douillet2001zech}
P.~Douillet, ``Zech logarithms and finite fields,'' tech. rep., Facult{\'e} des
  Sciences, Paris, 2001.

\bibitem{CGC-cd-book-pless98}
V.~Pless, R.~A. Brualdi, and W.~C. Huffman, {\em {Handbook of coding theory}}.
\newblock Elsevier Science Inc., 1998.

\bibitem{joux2013new}
A.~Joux, ``A new index calculus algorithm with complexity {L}(1/4+ o (1)) in
  very small characteristic.,'' {\em IACR Cryptology ePrint Archive},
  vol.~2013, p.~95, 2013.

\bibitem{golic1996computation}
J.~D. Goli{\'c}, ``Computation of low-weight parity-check polynomials,'' {\em
  Electronics Letters}, vol.~32, no.~21, pp.~1981--1982, 1996.

\bibitem{kuhn1994using}
G.~J. K{\"u}hn and W.~T. Penzhorn, ``Using {Z}ech’s logarithm to find
  low-weight parity checks for linear recurring sequences,'' in {\em
  Communications and Cryptography}, pp.~221--225, Springer, 1994.

\end{thebibliography}

\end{document}